\newtheorem{theorem}{Theorem}
\newtheorem{proposition}[theorem]{Proposition}
\newtheorem{problem}[theorem]{Problem}
\newtheorem{remark}[theorem]{Remark}
\newtheorem{definition}[theorem]{Definition}
\newtheorem{proof}{Proof}
\newcommand{\fM}{{\mathfrak M}}
\newcommand{\support}{{\rm Supp}}
\newcommand{\diag}{{\rm Diag}}
\newcommand{\D}{{\mathfrak D}}
\begin{document}
%
% paper title
% Titles are generally capitalized except for words such as a, an, and, as,
% at, but, by, for, in, nor, of, on, or, the, to and up, which are usually
% not capitalized unless they are the first or last word of the title.
% Linebreaks \\ can be used within to get better formatting as desired.
% Do not put math or special symbols in the title.
\title{Optimal steering to invariant distributions\\ for networks flows}
%
%
% author names and IEEE memberships
% note positions of commas and nonbreaking spaces ( ~ ) LaTeX will not break
% a structure at a ~ so this keeps an author's name from being broken across
% two lines.
% use \thanks{} to gain access to the first footnote area
% a separate \thanks must be used for each paragraph as LaTeX2e's \thanks
% was not built to handle multiple paragraphs
%
%
%\IEEEcompsocitemizethanks is a special \thanks that produces the bulleted
% lists the Computer Society journals use for "first footnote" author
% affiliations. Use \IEEEcompsocthanksitem which works much like \item
% for each affiliation group. When not in compsoc mode,
% \IEEEcompsocitemizethanks becomes like \thanks and
% \IEEEcompsocthanksitem becomes a line break with idention. This
% facilitates dual compilation, although admittedly the differences in the
% desired content of \author between the different types of papers makes a
% one-size-fits-all approach a daunting prospect. For instance, compsoc 
% journal papers have the author affiliations above the "Manuscript
% received ..."  text while in non-compsoc journals this is reversed. Sigh.

\author{Yongxin~Chen,~\IEEEmembership{Member,~IEEE,} 
        Tryphon~T. ~Georgiou,~\IEEEmembership{Fellow,~IEEE,}
        and~Michele~Pavon% <-this % stops a space
\IEEEcompsocitemizethanks{\IEEEcompsocthanksitem Y. Chen is with the School of Aerospace Engineering, Georgia Institute of Technology, Atlanta, GA 30332, USA, T.T. Georgiou is with the Department of Mechanical and Aerospace Engineering,University of California, Irvine, CA 92697, USA, M. Pavon is with the Department of Mathematics ``Tullio Levi-Civita",Universit\`a di Padova, 35121 Padova, Italy.}}
%\thanks{Manuscript received April 19, 2005; revised August 26, 2015.}}

% note the % following the last \IEEEmembership and also \thanks - 
% these prevent an unwanted space from occurring between the last author name
% and the end of the author line. i.e., if you had this:
% 
% \author{....lastname \thanks{...} \thanks{...} }
%                     ^------------^------------^----Do not want these spaces!
%
% a space would be appended to the last name and could cause every name on that
% line to be shifted left slightly. This is one of those "LaTeX things". For
% instance, "\textbf{A} \textbf{B}" will typeset as "A B" not "AB". To get
% "AB" then you have to do: "\textbf{A}\textbf{B}"
% \thanks is no different in this regard, so shield the last } of each \thanks
% that ends a line with a % and do not let a space in before the next \thanks.
% Spaces after \IEEEmembership other than the last one are OK (and needed) as
% you are supposed to have spaces between the names. For what it is worth,
% this is a minor point as most people would not even notice if the said evil
% space somehow managed to creep in.

% The paper headers
\markboth{Optimal steering to invariant distributions for networks flows} %Journal of \LaTeX\ Class Files,~Vol.~14, No.~8, August~2015}%
{Chen, Georgiou, and Pavon} %{Shell \MakeLowercase{\textit{et al.}}: Bare Demo of IEEEtran.cls for Computer Society Journals}
% The only time the second header will appear is for the odd numbered pages
% after the title page when using the twoside option.
% 
% *** Note that you probably will NOT want to include the author's ***
% *** name in the headers of peer review papers.                   ***
% You can use \ifCLASSOPTIONpeerreview for conditional compilation here if
% you desire.

% The publisher's ID mark at the bottom of the page is less important with
% Computer Society journal papers as those publications place the marks
% outside of the main text columns and, therefore, unlike regular IEEE
% journals, the available text space is not reduced by their presence.
% If you want to put a publisher's ID mark on the page you can do it like
% this:
%\IEEEpubid{0000--0000/00\$00.00~\copyright~2015 IEEE}
% or like this to get the Computer Society new two part style.
%\IEEEpubid{\makebox[\columnwidth]{\hfill 0000--0000/00/\$00.00~\copyright~2015 IEEE}%
%\hspace{\columnsep}\makebox[\columnwidth]{Published by the IEEE Computer Society\hfill}}
% Remember, if you use this you must call \IEEEpubidadjcol in the second
% column for its text to clear the IEEEpubid mark (Computer Society jorunal
% papers don't need this extra clearance.)

% use for special paper notices
%\IEEEspecialpapernotice{(Invited Paper)}

% for Computer Society papers, we must declare the abstract and index terms
% PRIOR to the title within the \IEEEtitleabstractindextext IEEEtran
% command as these need to go into the title area created by \maketitle.
% As a general rule, do not put math, special symbols or citations
% in the abstract or keywords.
\IEEEtitleabstractindextext{%
\begin{abstract}
We derive novel results on the ergodic theory of irreducible, aperiodic Markov chains. We show how to optimally steer the network flow to a stationary distribution over a finite or infinite time horizon. Optimality is with respect to an entropic distance between distributions on feasible paths. When the prior is reversible, it shown that solutions to this discrete time and space steering problem are reversible as well. A notion of temperature is defined for Boltzmann distributions on networks, and problems analogous to cooling (in this case, for evolutions in discrete space and time) are discussed. 
\end{abstract}

% Note that keywords are not normally used for peerreview papers.
\begin{IEEEkeywords}
Controlled Markov chain, Schr\"odinger Bridge, Reversibility, Regularized Optimal Mass Transport
\end{IEEEkeywords}}

% make the title area
\maketitle

% To allow for easy dual compilation without having to reenter the
% abstract/keywords data, the \IEEEtitleabstractindextext text will
% not be used in maketitle, but will appear (i.e., to be "transported")
% here as \IEEEdisplaynontitleabstractindextext when the compsoc 
% or transmag modes are not selected <OR> if conference mode is selected 
% - because all conference papers position the abstract like regular
% papers do.
\IEEEdisplaynontitleabstractindextext
% \IEEEdisplaynontitleabstractindextext has no effect when using
% compsoc or transmag under a non-conference mode.

% For peer review papers, you can put extra information on the cover
% page as needed:
% \ifCLASSOPTIONpeerreview
% \begin{center} \bfseries EDICS Category: 3-BBND \end{center}
% \fi
%
% For peerreview papers, this IEEEtran command inserts a page break and
% creates the second title. It will be ignored for other modes.
\IEEEpeerreviewmaketitle

\IEEEraisesectionheading{\section{Introduction}\label{sec:introduction}}
% Computer Society journal (but not conference!) papers do something unusual
% with the very first section heading (almost always called "Introduction").
% They place it ABOVE the main text! IEEEtran.cls does not automatically do
% this for you, but you can achieve this effect with the provided
% \IEEEraisesectionheading{} command. Note the need to keep any \label that
% is to refer to the section immediately after \section in the above as
% \IEEEraisesectionheading puts \section within a raised box.

% The very first letter is a 2 line initial drop letter followed
% by the rest of the first word in caps (small caps for compsoc).
% 
% form to use if the first word consists of a single letter:
% \IEEEPARstart{A}{demo} file is ....
% 
% form to use if you need the single drop letter followed by
% normal text (unknown if ever used by the IEEE):
% \IEEEPARstart{A}{}demo file is ....
% 
% Some journals put the first two words in caps:
% \IEEEPARstart{T}{his demo} file is ....
% 
% Here we have the typical use of a "T" for an initial drop letter
% and "HIS" in caps to complete the first word.
We consider an optimal steering problem for networks flows over a finite or infinite time horizon. Specifically, we derive discrete counterparts of our {\em cooling} results in \cite{CGPcooling}. The goal is to steer the Markovian evolution to a steady state with desirable properties while minimizing relative entropy, or relative entropy rate, over admissible distributions on paths. This relates to a special Markov Decision Process problem, cf. \cite[Section 6]{CGPAnnualReview} which is referred to as {\em Schr\"{o}dinger's Bridge Problem} (SBP) or {\em regularized optimal mass transport} (OMT). A rich history on this circle of ideas originates from two remarkable papers by Erwin Schr\"odinger in 1931/32 \cite{S1,S2} who was interested in a large deviation problem for a cloud of Brownian particles, and of earlier work by Gaspar Monge in 1781 on the problem of optimally transporting mass between sites \cite{Monge}.

Important contributions on the existence question in SBP, left open by Schr\"odinger, were provided over time by Fortet, Beurling, Jamison and F\"ollmer \cite{For,Beu,Jam2,F2}.
It should be remarked that Fortet's proof in 1940 is algorithmic, establishing convergence of a (rather complex) iterative scheme. This predates by more than twenty years the contribution of Sinkhorn \cite{Sin64} who established convergence in a special case of the discrete problem The latter proof is much simpler than in the continuous (time and space) problem solved by Fortet. Thus, these algorithms should be called Fortet-IPF-Sinkhorn, where IPF stands for the Iterative Proportional Fitting algorithm proposed in 1940 without proof of convergence in \cite{DS1940}. These problems admit a ``static" and a ``dynamic" formulation. While the static, discrete space problem was studied by many, starting from Sinkhorn, see \cite{Cuturi,PC} and references therein, the dynamic, discrete problem was considered in \cite{PT,GP,CGPT1,CGPT2,CGPT3}. We provide in this paper a new fluid-dynamic derivation of the Schr\"odinger system on which the iteration is based.
Convergence of the iterative scheme in a suitable projective metric was also established in \cite{GP} for the discrete case and in \cite{CGP_SIAMAppl.} for the continuous case.

This topic lies nowadays at the crossroads of many fields of science such as probability, statistical physics, optimal mass transport, machine learning, computer graphics, statistics, stochastic control, image processing, etc. Several survey papers have  appeared over time emphasizing different aspects of the subject, see \cite{W,leo2,PC,CGP_SIREV,CGPAnnualReview,CGP_CSM}. 

The paper is outlined as follows. 
In Section \ref{GBProblems}, we present background on the discrete-time and space Schr\"odinger bridge problem while emphasizing its ``fluid dynamic'' formulation (the original proof of a key well-known result is deferred to the Appendix).
In Section \ref{OSSS}, we introduce the corresponding 
infinite-horizon steering problem by minimizing the entropy rate with respect to the prior measure. We then discuss existence for the one-step Schr\'odinger system leading to existence for the steering problem. Section \ref{REVERSE} is devoted to an interesting result linking optimality in the steering problem to reversibility of the solution. The final Section \ref{cooling} applies all the previous results to cooling, where the goal is to optimally steer the Markov chain to a steady state corresponding to a lower {\em effective} temperature.

%\IEEEPARstart{T}{his} demo file is intended to serve as a ``starter file''for IEEE Computer Society journal papers produced under \LaTeX\ usingIEEEtran.cls version 1.8b and later.
% You must have at least 2 lines in the paragraph with the drop letter
% (should never be an issue)
\section{Discrete-time dynamic bridges: background}\label{GBProblems}

We begin by discussing a paradigm of great significance in network flows. It amounts to designing probabilistic transitions between nodes, and thereby probability laws on path spaces, so as to reconcile marginal distributions with priors that reflect on the structure of the network and objectives on transferance of resourses across the network. The basic formulation we discuss amounts to the so called Schr\"odinger bridge problem in discrete-time. This dynamic formulation echoes the fluid dynamic formulation of the classical (continuous time and space) Schr\"odinger bridge problem. Certain proofs that we present are mildly new.

We consider a directed, strongly connected, aperiodic graph ${\bf G}=(\mathcal X,\mathcal E)$ with vertex set $\mathcal X=\{1,2,\ldots,n\}$ and edge set $\mathcal E\subseteq \mathcal X\times\mathcal X$, 
and we consider trajectories/paths on this graph over the time set ${\mathcal T}=\{0,1,\ldots,N\}$. The family of feasible paths $x=(x_0,\ldots,x_N)$ of length $N$, namely paths such that $x_ix_{i+1}\in\mathcal E$ for $i=0,1,\ldots,N-1$,   
is denoted by ${\mathcal FP}_0^N\subseteq\mathcal X^{N+1}$.
We seek a probability distribution $\mathfrak P$ on the space of paths ${\mathcal {FP}}_0^N$ with prescribed initial and final marginals
$\nu_0(\cdot)$ and $\nu_N(\cdot)$, respectively, and such that the resulting random evolution
is closest to a ``prior'' measure $\fM$ on ${\mathcal {FP}}_0^N$ in a suitable sense.

The prior law for our problem is a Markovian evolution
 \begin{equation}\label{FP}
\mu_{t+1}(x_{t+1})=\sum_{x_t\in\mathcal X} \mu_t(x_t) m_{x_{t}x_{t+1}}(t)
\end{equation}
with nonnegative distributions $\mu_t(\cdot)$ over $\mathcal X$, $t\in{\mathcal T}$, and weights $m_{ij}(t)\geq 0$ for all indices $i,j\in{\mathcal X}$ and all times.
In accordance with 
the topology of the graph, $m_{ij}(t)=0$ for all $t$ whenever $ij\not\in\mathcal E$.  Often, but not always, the matrix
\begin{equation}\label{eq:matrixM}
M(t)=\left[ m_{ij}(t)\right]_{i,j=1}^n
\end{equation}
does not depend on $t$. 
Here, $m_{ij}$ plays the same role as that of the heat kernel in continuous-time.
However, herein, the transition matrix $M(t)$ may not represent a {\em probability} transition matrix
in that the rows of $M(t)$ do not necessarily sum up to one. Thus, the ``total transported mass'' is not necessarily preserved,
in that $\sum_{x_t\in\mathcal X} \mu_t(x_t)$ may be afunction of time.
 A particular case of interest is when $M$ is the {\em adjacency matrix} of the graph that encodes the topological structure of the network.

The evolution \eqref{FP}, together with measure $\mu_0(\cdot)$, which we assume positive on $\mathcal X$, i.e.,
\begin{equation}\label{eq:mupositive}
\mu_0(x)>0\mbox{ for all }x\in\mathcal X,
\end{equation}
 induces
a measure $\fM$ on ${\mathcal {FP}}_0^N$ as follows. It assigns to a path  $x=(x_0,x_1,\ldots,x_N)\in{\mathcal {FP}}_0^N$ the value
\begin{equation}\label{prior}\fM(x_0,x_1,\ldots,x_N)=\mu_0(x_0)m_{x_0x_1}(0)\cdots m_{x_{N-1}x_N}(N-1),
\end{equation}
and gives rise to a flow
of {\em one-time marginals}
\[\mu_t(x_t) = \sum_{x_{\ell\neq t}}\fM(x_0,x_1,\ldots,x_N), \quad t\in\mathcal T.\]
\begin{definition} We denote by ${\mathcal P}(\nu_0,\nu_N)$ the family of probability distributions on ${\mathcal {FP}}_0^N$ having the prescribed marginals $\nu_0(\cdot)$ and $\nu_N(\cdot)$.
\end{definition}

We seek a distribution in this set which is closest to the prior $\fM$ in {\em relative entropy} (divergence, Kullback-Leibler index)  defined by 
\begin{equation*}
\D(\mathfrak P\|Q):=\left\{\begin{array}{ll} \sum_{x}\mathfrak P(x)\log\frac{\mathfrak P(x)}{\mathfrak M(x)}, & \support (\mathfrak P)\subseteq \support (\mathfrak M),\\
+\infty , & \support (\mathfrak P)\not\subseteq \support (\mathfrak M),\end{array}\right.
\end{equation*}
Here, by definition,  $0\cdot\log 0=0$.
The value of $\D(\mathfrak P\|\mathfrak M)$ may turn out to be negative due to the different total masses in the case when $\fM$ is not a probability measure. The optimization problem, however, poses no challenge  as the relative entropy  is  (jointly) convex over this larger domain and is bounded below.
This brings us to the 
so-called
{\em Schr\"odinger Bridge  Problem} (SBP):

\begin{problem}\label{prob:optimization}
Determine
 \begin{eqnarray}\label{eq:optimization}
\mathfrak P^*[\nu_0,\nu_N]:={\rm argmin}\{ \D(\mathfrak P\|\fM) \mid  \mathfrak P\in {\mathcal P}(\nu_0,\nu_N)
\}.
\end{eqnarray}
\end{problem}

By analyzing the contribution to the entropy functional of transitions along edges, in a sequential manner, we arive at a  formulation that parallels the fluid-dynamic formulation of the Schr\"odinger problem in continuous time\footnote{
The fluid dynamic reformulation of  Schr\"odinger's bridge problem in continuous space/time drew important connections with Monge-Kantorovich optimal mass transport (OMT) \cite{BB} --the latter seen as a limit when the stochastic excitation reduces to zero  \cite{leo2}. In particular, it pointed to
a time-symmetric formulation of Schr\"odinger's bridges \cite{CGPJOTA}.}, considered in \cite[Section 4]{leo2} and \cite{CGPJOTA}.

Suppose we restrict our search in Problem \ref{prob:optimization} to Markovian $\mathfrak P$'s. In analogy to (\ref{prob:optimization}), we then have
\begin{equation}\label{markovianmeasure}
\mathfrak P(x_0,x_1,\ldots,x_N)=\nu_0(x_0)\pi_{x_0x_1}(0)\cdots \pi_{x_{N-1}x_N}.
\end{equation}
Also let $p_t$ be the one-time marginals of $\mathfrak P$. i.e.
\[p_t(x_t) = \sum_{x_{\ell\neq t}}\mathfrak P(x_0,x_1,\ldots,x_N), \quad t\in\mathcal T.\] 
We finally have the update mechanism
\begin{equation}\label{update}
p_{t+1}(x_{t+1})=\sum_{x_t\in\mathcal X} p_t(x_t) \pi_{x_{t}x_{t+1}}(t)
\end{equation}
which, in vector form, is 
\begin{equation}
\label{eq:fp}
p_{t+1}=\Pi'(t) p_{t}.
\end{equation}
Here $\Pi=\left[ p_{ij}(t)\right]_{i,j=1}^n$ is the transition matrix and prime denotes transposition.
%({\em This corresponds to the Fokker-Plank in the continuous setting}).
Using (\ref{prior})-(\ref{markovianmeasure}) we obtain
\[\D(\mathfrak P\|\fM)=\D(\nu_0\|\mu_0) + \sum_{t=0}^{N-1}\sum_{x_t}\D(\pi_{x_tx_{t+1}}(t)\|m_{x_tx_{t+1}}(t))p_t(x_t).
\]
Since $\D(\nu_0\|\mu_0)$ is invariant over ${\mathcal P}(\nu_0,\nu_N)$, we can now have the following equivalent (fluid dynamic) formulation:
\vspace{.3cm}
\begin{problem}\label{problemfluid-dynamic}
 \begin{align}%\label{fluid-dynamic1}
{\rm min}_{(p,\pi)}&\left\{\sum_{t=0}^{N-1}\sum_{x_t}\D\left(\pi_{x_tx_{t+1}}(t)\|m_{x_tx_{t+1}}(t))p_t(x_t)\right)\right\},\nonumber\\\label{fluid-dynamic2}{\rm \hspace*{-15pt}subject\; to}\; & \nonumber\\
&p_{t+1}(x_{t+1})=\sum_{x_t\in\mathcal X} p_t(x_t) \pi_{x_{t}x_{t+1}}(t),\\
&  \sum_{x_{t+1}}\pi_{x_{t}x_{t+1}}(t)=1, \; \forall x_t\in\mathcal X, \mbox{ and}\label{fluid-dynamic3}\\
&p_0(x_0)=\nu_0(x_0), \quad p_N(x_N)=\nu_N(x_N),\label{fluid-dynamic4}\\
&\mbox{for } t\in\{0,1,\ldots,N-1\}.
\end{align}
\end{problem}
\noindent
Here $p=\{p_t; t=0,1,\ldots,N\}$ is a flow of probability distributions on $\mathcal X$ (corresponds to the flow of densities $\rho_t$ in the continuous setting) and $\pi=\{\pi_{x_tx_{t+1}}(t); t=0,1,\ldots,N-1\}$ is a flow of transition probabilities (corresponds to the flow of the drifts). 
The relative entropy $\D\left(\pi_{x_tx_{t+1}}(t)\|m_{x_tx_{t+1}}(t))\right)$ between transition laws, corresponds to the
square of the modulus of the velocity field in the continuous setting \cite{CGPJOTA}, so that $\D\left(\pi_{x_tx_{t+1}}(t)\|m_{x_tx_{t+1}}(t))p_t(x_t)\right)$ in fact corresponds to kinetic energy.

We do not impose explicitly the non-negativity constraint on the $p_t$ and $\pi_{x_tx_{t+1}}(t)$ as
%we hope
it will be automatically satisfied  (due to the fact that the relative entropy grows unbounded at the boundary of the admissible distributions and transition probabilities). We have the following theorem:

\begin{theorem} \label{maincl}Suppose there exists a pair of nonnegative functions $(\varphi,\hat{\varphi})$ defined  on $[0,T]\times{\cal X}$ and satisfying the system
\begin{eqnarray}\label{scsi1}\varphi(t,x_t)=\sum_{x_{t+1}}m_{x_{t}x_{t+1}}(t)\varphi(t+1,x_{t+1}),\\
\hat{\varphi}(t+1,x_{t+1})=\sum_{x_t}  m_{x_{t}x_{t+1}}(t)\hat{\varphi}(t,i),\label{scsi2}
\end{eqnarray}
for $0\le t \le(T-1)$, as well as the boundary conditions
\begin{equation}\label{scsi3}
\varphi(0,x_0)\cdot\hat{\varphi}(0,x_0)=\nu_)(x_0),\; \varphi(T,x_T)\cdot\hat{\varphi}(T,x_T)=\nu_T(x_T),
\end{equation}
for $x_t\in\mathcal X$ and $t\in\{0,T\}$, accordingly.
Suppose moreover that $\varphi(t,i)>0,\; \forall 0\le t\le T, \forall i\in{\cal X}$. Then, the Markov distribution $\mathfrak P^*$ in ${\mathcal P}(\nu_0,\nu_N)$ having transition probabilities
\begin{equation}\pi^*_{x_{t}x_{t+1}}(t)=m_{x_{t}x_{t+1}}(t)\frac{\varphi(t+1,x_{t+1})}{\varphi(t,x_{t})}
\end{equation}
solves Problem \ref{problemfluid-dynamic}. 
\end{theorem}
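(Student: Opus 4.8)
The plan is to verify that the candidate Markov law $\mathfrak{P}^*$ is admissible for Problem~\ref{problemfluid-dynamic} and then that it is its unique minimizer, by exhibiting a lower bound for $\D(\,\cdot\,\|\fM)$ over $\mathcal{P}(\nu_0,\nu_N)$ that is attained exactly at $\mathfrak{P}^*$. The mechanism behind the bound is the ``completing the square'' identity
\[
\D(\mathfrak{P}\|\fM)=\D(\mathfrak{P}\|\mathfrak{P}^*)+\sum_x\mathfrak{P}(x)\log\frac{\mathfrak{P}^*(x)}{\fM(x)},
\]
valid whenever its left-hand side is finite, together with the observation that the log-likelihood ratio $\log(\mathfrak{P}^*/\fM)$ telescopes, thanks to the Schr\"odinger system \eqref{scsi1}--\eqref{scsi2}, into a sum of a function of $x_0$ alone and a function of $x_N$ alone.

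First I would check feasibility. Summing the proposed transition probabilities over $x_{t+1}$ and using \eqref{scsi1} shows $\sum_{x_{t+1}}\pi^*_{x_tx_{t+1}}(t)=1$, so each $\Pi^*(t)$ is an honest stochastic matrix and \eqref{markovianmeasure} with initial marginal $\nu_0$ defines a probability measure on $\mathcal{FP}_0^N$ with initial one-time marginal $\nu_0$. From the boundary condition \eqref{scsi3} one has $\hat{\varphi}(0,x_0)=\nu_0(x_0)/\varphi(0,x_0)$, and an induction on $t$ using the explicit form of $\pi^*$ and the forward recursion \eqref{scsi2} gives $p_t(x_t)=\varphi(t,x_t)\hat{\varphi}(t,x_t)$ for the one-time marginals of $\mathfrak{P}^*$; at $t=N$, \eqref{scsi3} then yields $p_N=\nu_N$. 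Hence $\mathfrak{P}^*\in\mathcal{P}(\nu_0,\nu_N)$, and being Markovian it corresponds to an admissible $(p^*,\pi^*)$ for Problem~\ref{problemfluid-dynamic}.

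Next comes the core step. Since $\varphi>0$ and $\mu_0>0$, the measure $\mathfrak{P}^*$ is strictly positive on $\support(\fM)\cap\{x:\nu_0(x_0)>0\}$, and plugging the explicit $\pi^*$ into \eqref{prior}--\eqref{markovianmeasure} gives, for $x=(x_0,\ldots,x_N)$,
\begin{align*}
\log\frac{\mathfrak{P}^*(x)}{\fM(x)}
&=\log\frac{\nu_0(x_0)}{\mu_0(x_0)}+\sum_{t=0}^{N-1}\log\frac{\varphi(t+1,x_{t+1})}{\varphi(t,x_t)}\\
&=\log\frac{\nu_0(x_0)}{\mu_0(x_0)}+\log\varphi(N,x_N)-\log\varphi(0,x_0),
\end{align*}
which depends on the path only through its endpoints $x_0,x_N$. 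Consequently, for every $\mathfrak{P}\in\mathcal{P}(\nu_0,\nu_N)$ with finite divergence, $\sum_x\mathfrak{P}(x)\log\frac{\mathfrak{P}^*(x)}{\fM(x)}$ is determined entirely by the fixed marginals $\nu_0,\nu_N$; choosing $\mathfrak{P}=\mathfrak{P}^*$ identifies it with $\D(\mathfrak{P}^*\|\fM)$. The completing-the-square identity then reads $\D(\mathfrak{P}\|\fM)=\D(\mathfrak{P}\|\mathfrak{P}^*)+\D(\mathfrak{P}^*\|\fM)\ge\D(\mathfrak{P}^*\|\fM)$, with equality iff $\mathfrak{P}=\mathfrak{P}^*$ (and the case $\D(\mathfrak{P}\|\fM)=+\infty$ is trivial). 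Since the cost of Problem~\ref{problemfluid-dynamic} equals $\D(\mathfrak{P}\|\fM)-\D(\nu_0\|\mu_0)$ and $\D(\nu_0\|\mu_0)$ is a finite constant over $\mathcal{P}(\nu_0,\nu_N)$, while feasible $(p,\pi)$'s are exactly the Markovian elements of $\mathcal{P}(\nu_0,\nu_N)$, the pair $(p^*,\pi^*)$ attached to the globally optimal, and itself Markovian, $\mathfrak{P}^*$ solves Problem~\ref{problemfluid-dynamic}.

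The steps are short; the only points requiring care are the support bookkeeping that makes the completing-the-square identity legitimate --- namely that any $\mathfrak{P}\in\mathcal{P}(\nu_0,\nu_N)$ of finite cost is supported where $\mathfrak{P}^*>0$, so that $\D(\mathfrak{P}\|\mathfrak{P}^*)$ is a well-defined element of $[0,+\infty]$ and no $\infty-\infty$ arises --- and the inductive factorization $p_t=\varphi_t\hat{\varphi}_t$, which is precisely what couples the two halves of the Schr\"odinger system to the endpoint data. An alternative would be to form the Lagrangian of Problem~\ref{problemfluid-dynamic}, attaching multipliers to the flow constraint and to the stochasticity constraint \eqref{fluid-dynamic3}, and to read \eqref{scsi1} and the multiplicative form of $\pi^*$ off the stationarity conditions, with $\log\varphi(t,\cdot)$ serving as the multiplier of the flow constraint; convexity would then promote this critical point to the global minimum. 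I would use the first, self-contained argument.
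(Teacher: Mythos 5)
Your argument is correct, but it is not the route the paper takes. You verify feasibility (rows of $\Pi^*(t)$ sum to one by \eqref{scsi1}; the induction $p_t=\varphi(t,\cdot)\hat\varphi(t,\cdot)$ via \eqref{scsi2} and the boundary conditions \eqref{scsi3} gives $p_0=\nu_0$, $p_N=\nu_N$), and then run a ``completion of relative entropy'' argument: since $\log\bigl(\mathfrak P^*(x)/\fM(x)\bigr)$ telescopes to a function of the endpoints only, the cross term $\sum_x\mathfrak P(x)\log\frac{\mathfrak P^*(x)}{\fM(x)}$ is the same for every $\mathfrak P\in{\mathcal P}(\nu_0,\nu_N)$ of finite divergence, whence $\D(\mathfrak P\|\fM)=\D(\mathfrak P\|\mathfrak P^*)+\D(\mathfrak P^*\|\fM)\ge\D(\mathfrak P^*\|\fM)$; your support bookkeeping (using $\varphi>0$, $\mu_0>0$ and the marginal constraint at $t=0$) is what makes the identity legitimate, and it is in order. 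This is precisely the second of the ``two usual approaches'' the authors mention and deliberately avoid (the completion-of-entropy argument of \cite{PT}); their Appendix instead gives a Lagrangian, fluid-dynamic derivation: multipliers $\lambda_t$ on the flow constraint, discrete summation by parts, pointwise minimization in $\pi$ yielding the multiplicative form of $\pi^*$ with $\varphi=e^{\lambda}$, the backward recursion \eqref{scsi1} emerging from the stochasticity constraint, the minimized Lagrangian shown independent of $p$, and finally $\hat\varphi:=p^*/\varphi$ verified to be space-time co-harmonic. The trade-off: your verification proof is shorter, self-contained, and actually yields more than the theorem asks --- optimality of $\mathfrak P^*$ over \emph{all} (not only Markovian) measures in ${\mathcal P}(\nu_0,\nu_N)$, i.e.\ Problem \ref{prob:optimization}, together with uniqueness --- whereas the paper's variational derivation is constructive: it explains how the Schr\"odinger system and the factorization $p^*=\varphi\hat\varphi$ arise as first-order conditions of Problem \ref{problemfluid-dynamic}, which is the methodological novelty the authors emphasize. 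One presentational nit: when you invoke equality iff $\mathfrak P=\mathfrak P^*$, note explicitly that both are probability measures so that $\D(\mathfrak P\|\mathfrak P^*)=0$ forces equality; uniqueness is not claimed in the theorem, so this is a bonus rather than a needed step.
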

\begin{remark}\label{uniqueness}Notice that if $(\varphi,\hat{\varphi})$ satisfy (\ref{scsi1})-(\ref{scsi2})-(\ref{scsi3}), so does the pair $(c\varphi,\frac{1}{c}\hat{\varphi})$ for all $c>0$. Hence, uniqueness for the Schr\"{o}dinger system is always intended as uniqueness of the ray.
\end{remark}
The proof of Theorem \ref{maincl} is deferred to Appendix \ref{A}.
Our derivation there of the Schr\"odinger system appears original and alternative to the two usual approaches. The first one uses Lagrange multipliers for the constraints involving the marginals \cite{GP} very much as in Schr\"odinger's original spirit. The second uses a ``completion of the relative entropy" argument, see \cite[Section IV]{PT}. Let $M(t)=\left(m_{ij}(t)\right)$. Under the assumption that  the entries of the matrix product 
\begin{equation}\label{posassumption}G:=M(0)M(1) \cdots M(N-2)M(N-1)
\end{equation} 
are all positive, there exists a (unique in the projective geometry sense) solution to the system (\ref{scsi1})-(\ref{scsi3}) which can be computed through a Fortet-IPF-Sinkhorn iteration \cite{For,DS1940,Sin64,Cuturi,GP}.

\section{Optimal Steering to a Steady State}\label{OSSS}

Consider a given prior transition rate between the vertices of the graph, represented by a matrix $M$ as in (\ref{eq:matrixM}), which at present we assume as being time-invariant. Suppose $\pi$ is a desired stationary  probability distribution that we wish to maintain over the state space $\mathcal X$ for the distribution of resources, goods, packets, vehicles, etc.\ at steady state. We address below the following basic question: {\em How can we modify  $M$ so that the new evolution is close to the prior in relative entropy while at the same time it admits $\pi$ as invariant distribution?} 
We address the above question next and provide a partial answer
to the complementary question on what conditions ensure that any given $\pi$ can be rendered the stationary distribution with a transition law compatible with the topology of the graph.

\subsection{Minimizing the relative entropy rate}
We consider the following problem inspired by \cite{DL}. Given a (prior) measure $\fM\in {\mathcal {FP}}_0^N$ as in Section \ref{GBProblems}, corresponding to a constant transition matrix $M$, and a probability vector $\pi$ on $\mathcal X$, 
we are seeking a 
(row)
stochastic matrix $\Pi$ such that 
\[
\Pi'\pi=\pi
\]
while the corresponding (time-invariant) measure $\mathfrak P$ on $\mathcal X\times\mathcal X\times\cdots$ is close to $\mathfrak M$ in a suitable sense. Our interest here focuses on the infinite horizon case where $N$ is arbitrarily large.

To this, we let ${\mathcal P}$ be the family of probability distributions on ${\mathcal {FP}}_0^N$, and consider the following problem.
\begin{problem}\label{P1}
\begin{eqnarray}\nonumber
&&\min_{\mathfrak P\in\mathcal P}\lim_{N\rightarrow\infty}\frac{1}{N}\D\left(\mathfrak P_{[0,N]}\|\mathfrak M_{[0,N]}\right)\\
{\rm subject\, to}\; &&\Pi'\pi=\pi,\nonumber\\\nonumber&&\Pi\mathds{1}=\mathds{1},
\end{eqnarray}
where $\mathds{1}$ is the vector with all entries equal to $1$.
\end{problem}
By formula $(18)$ in \cite{PT}, which applies to general (not necessarily mass preserving) Markovian evolutions, we have the following representation for the relative entropy between Markovian measures
\begin{eqnarray}\nonumber
&&\D\left(\mathfrak P_{[0,N]}\|\mathfrak M_{[0,N]}\right)\\\nonumber&&=\D(\pi\|m_0)+\sum_{k=0}^{N-1}\sum_{i_k}\D(\pi_{i_ki_{k+1}}\|m_{i_ki_{k+1}})\pi(i_k).
\end{eqnarray}
 Thus, Problem \ref{P1} reduces to the {\em stationary} Schr\"{o}dinger bridge problem:
 \begin{problem}\label{P2}
 \begin{eqnarray}\nonumber
&& \min_{\left(\pi_{ij}\right)}\sum_{i,j\in\mathcal X}\D(\pi_{ij}\|m_{ij})\pi(i),\\
{\rm subject\, to}\; &&\sum_{i\in\mathcal X} \pi_{ij}\pi(i)=\pi(j),\quad j\in\mathcal X, \nonumber\\
\nonumber&&\sum_{j\in\mathcal X}\pi_{ij}=1, \quad i\in\mathcal X.
\end{eqnarray}
 \end{problem}
 
 Note that in the above we have ignored enforcing the nonnegativity of the $\pi_{ij}$. This  problem can readily be shown to be equivalent to a standard {\em one-step} Schr\"odinger bridge problem  for the joint distributions $p(i,j)$ and $m(i,j)$ at times $t\in\{0,1\}$ with the two marginals equal to $\pi$. Indeed, a straightforward calculation gives
 \begin{eqnarray}\nonumber&&\sum_{ij}\log\frac{\pi(i,j)}{m(i,j)}{\pi(i,j)}=\sum_{ij}\log\frac{\pi_{ij}\pi(i)}{m_{ij}m_0(i)}{\pi_{ij}\pi(i)}\\&&=\sum_{i}\D(\pi_{ij}\|m_{ij})\pi(i)+ \D(\pi\|m_0).\label{SBjoint}
 \end{eqnarray}
 
Following Theorem \ref{maincl}, suppose there exist two vectors
 \[(\varphi(t,\cdot),\hat{\varphi}(t,\cdot)), \mbox{ for } t\in\{0,1\},
 \] 
  with nonnegative entries, satisfying the Schr\"odinger system 
 
 \begin{subequations}\label{eq:DSchroesystem}
\begin{eqnarray}\label{eq:DSchroesystemA}
&&\varphi(0,i)=\sum_j m_{ij}\varphi(1,j),\\
&&\hat{\varphi}(1,j)=\sum_i m_{ij}\hat{\varphi}(0,i)\label{eq:DSchroesystemB},\\&&\varphi(0,i)\cdot\hat{\varphi}(0,i)=\pi(i),\label{eq:DSchroesystemC}\\&&\varphi(1,j)\cdot\hat{\varphi}(1,j)=\pi(j).\label{eq:DSchroesystemD}
\end{eqnarray}
\end{subequations}
Thence, we conclude that
\begin{equation}\label{one-step_opt}\pi^*_{ij}=\frac{\varphi(1,j)}{\varphi(0,i)}m_{ij}
\end{equation} 
satisfies both constraints of Problem \ref{P2}. The corresponding measure $\mathfrak P^*\in\mathcal P$ solves Problem \ref{P1}. There is, however, a {\em difficulty}. The assumption on the matrix $G$ in (\ref{posassumption}), which guarantees existence for the Schr\"odinger system, becomes here the fact that $M$ must have all positive elements. The latter property is typically not satisfied 
% if not because
since
  $M$ must comply with the topology of the graph. There exists, fortunately, a reasonable condition on $M$ which ensures existence of solutions for (\ref{eq:DSchroesystem}).
 
 \subsection{Existence for the one-step Schr\"odinger system}\label{Existence}
 \begin{definition} \cite{FL} A square matrix $A =(a_{ij})$ is called  {\em indecomposable} if no permutation matrix $P$ exists such that
 \[A=P\left[\begin{array}{cc}A_{11} &0 \\ A_{21} & A_{22}\end{array}\right]P'
 \]
 where $A_{11}$ and $A_{22}$ are nonvacuous square matrices. $A$ is called {\em fully indecomposable} if there exist no pair of permutation matrices $P$ and $Q$ such that 
 \begin{equation}\label{structure}A=P\left[\begin{array}{cc}A_{11} &0 \\ A_{21} & A_{22}\end{array}\right]Q
\end{equation}
where $A_{11}$ and $A_{22}$ are nonvacuous square matrices.
 \end{definition}
 
 \begin{remark} A square, indecomposable matrix $A$ has a real positive simple eigenvalue equal to its {\em spectral radius} \cite{Varga}. Let $A=(a_{ij})$ be the {\em adjacency matrix} of the graph ${\bf G}=(\mathcal X,\mathcal E)$, namely
 \[a_{ij}=\left\{\begin{array}{ll} 1, &(i, j)\in\mathcal E,\\0, & (i,j)\not\in \mathcal E.\end{array}\right.
 \]
Then $A$ is indecomposable if and only if ${\bf G}=(\mathcal X,\mathcal E)$ is strongly connected \cite[p.608]{FL}.
 \end{remark}
 
 %\begin{remark}

%It turns out that a square, indecomposable matrix $A$ has a real positive simple eigenvalue equal to its {\em spectral radius} \cite{Varga}.
%\blue and this suffices for our purposes. This condition applied to $A=[a_{ij}]_{i,j=1}^n$ being the {\em adjacency matrix} of the graph ${\bf G}=(\mathcal X,\mathcal E)$, i.e.,\[a_{ij}=\left\{\begin{array}{ll} 1, &(i, j)\in\mathcal E\\0, & (i,j)\not\in \mathcal E,\end{array}\right.\]implies (and thus, it is slightly stronger than) the condition that ${\bf G}=(\mathcal X,\mathcal E)$ is strongly connected.\black
%\end{remark}

 \begin{proposition}\label{PROP} Suppose $M$ is fully indecomposable and $\pi$ has all positive components. Then there exists a solution to (\ref{eq:DSchroesystem}) with $\varphi(0,\cdot)$ and $\varphi(1,\cdot)$ with positive components which is unique in the sense of Remark \ref{uniqueness}.
 \end{proposition}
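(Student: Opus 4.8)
The plan is to recognize the one-step system (\ref{eq:DSchroesystem}) as a diagonal matrix--scaling problem and then apply the classical scaling theory for fully indecomposable matrices. Writing $D_v=\diag(v)$ and setting $u=\varphi(1,\cdot)$, $w=\hat\varphi(0,\cdot)$, equations (\ref{eq:DSchroesystemA})--(\ref{eq:DSchroesystemB}) merely define $\varphi(0,\cdot)=Mu$ and $\hat\varphi(1,\cdot)=M'w$, after which (\ref{eq:DSchroesystemC})--(\ref{eq:DSchroesystemD}) say precisely that the nonnegative matrix $P:=D_w M D_u$ has all row sums and all column sums equal to $\pi$: the $i$-th row sum is $w_i(Mu)_i=\hat\varphi(0,i)\varphi(0,i)$ and the $j$-th column sum is $u_j(M'w)_j=\varphi(1,j)\hat\varphi(1,j)$. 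Thus solving (\ref{eq:DSchroesystem}) with $\varphi(0,\cdot),\varphi(1,\cdot)>0$ is equivalent to finding positive $u,w$ scaling $M$ to such a $P$; this $P$ is exactly the optimal one-step joint law with both marginals $\pi$ that appears in (\ref{SBjoint}).

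To produce the scaling I would exploit the strictly convex one-step bridge problem itself: minimize $\sum_{ij}\D(p_{ij}\|m_{ij})$ over the transportation polytope $\{p\ge 0:\ p\mathds{1}=\pi,\ p'\mathds{1}=\pi,\ \support(p)\subseteq\support(M)\}$. This set is compact, so a minimizer exists; by full indecomposability of $M$ together with $\pi>0$ it is moreover nonempty and contains points that are strictly positive on all of $\support(M)$. Since $t\mapsto t\log(t/m)$ has derivative $-\infty$ at $t=0^{+}$, the unique minimizer $p^{*}$ cannot vanish on any edge of $\support(M)$, and the Lagrangian stationarity conditions for the two marginal constraints then force $p^{*}_{ij}=m_{ij}e^{\lambda_i}e^{\mu_j}$ on $\support(M)$; hence $w_i:=e^{\lambda_i}$, $u_j:=e^{\mu_j}$ solve the scaling problem. (Alternatively one may simply invoke the diagonal-scaling theorem following \cite{FL}, an extension of Sinkhorn's theorem \cite{Sin64}, or run the Fortet--IPF--Sinkhorn iteration of \cite{GP} and prove convergence in Hilbert's projective metric.) Full indecomposability also gives $\mathrm{per}(M)>0$, so $M$ has no zero row and no zero column; consequently $u,w>0$ yield $\varphi(1,\cdot)=u>0$, $\varphi(0,\cdot)=Mu>0$, $\hat\varphi(0,\cdot)=w>0$, $\hat\varphi(1,\cdot)=M'w>0$, as required.

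For uniqueness in the sense of Remark \ref{uniqueness}: if $w_i m_{ij}u_j=w_i' m_{ij}u_j'$ for all $(i,j)\in\support(M)$, then $w_i/w_i'=u_j'/u_j$ along every edge of $\support(M)$, and since a fully indecomposable $M$ has a connected bipartite graph this common ratio is a constant $\alpha>0$; propagating $\alpha$ through $\varphi(0,\cdot)=Mu$ and $\hat\varphi(1,\cdot)=M'w$ shows the whole quadruple $(\varphi,\hat\varphi)$ is determined up to $(\varphi,\hat\varphi)\mapsto(\alpha^{-1}\varphi,\alpha\hat\varphi)$, which is precisely the ray of Remark \ref{uniqueness}. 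The main obstacle is the assertion in the second paragraph that full indecomposability makes the entropy minimizer charge \emph{every} edge of $\support(M)$ --- equivalently, that a transportation plan positive on the whole pattern with marginals $\pi$ exists for every positive $\pi$. This is exactly where (full) indecomposability is needed: a decomposition of $M$ exhibits a row set $R$ and column set $C$ with $|R|+|C|=n$ on which $M$ vanishes, and then choosing $\pi$ to overweight $R$ relative to $C^{c}$ empties the polytope; conversely, the absence of such a zero block yields strict feasibility via a max-flow / LP-duality argument. Everything else is bookkeeping.
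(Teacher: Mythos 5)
Your reduction of \eqref{eq:DSchroesystem} to a diagonal scaling problem, and the variational scheme you propose (minimize $\sum_{ij}\D(p_{ij}\|m_{ij})$ over the transportation polytope, use the infinite slope of $t\mapsto t\log(t/m)$ at $t=0^+$ to force the minimizer to charge every edge of $\support(M)$, then read off the exponentiated multipliers), is a legitimate and more self-contained route than the paper's proof, which simply passes to the matrix form \eqref{eq:DSchroesystemMatrix} and invokes the Marshall--Olkin scaling argument; your uniqueness argument via connectedness of the bipartite graph of a fully indecomposable pattern is also fine. However, there is a genuine gap exactly at the step you flag and then wave off: the claim that full indecomposability of $M$ together with $\pi>0$ yields a feasible plan (let alone one strictly positive on $\support(M)$) with both marginals equal to $\pi$. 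The correct feasibility criterion (Gale--Hoffman/max-flow) is that $\pi(R)+\pi(C)\le 1$ for \emph{every} zero block $M[R,C]=0$; full indecomposability only rules out zero blocks with $|R|+|C|=n$, and this does not imply those inequalities for an arbitrary positive $\pi$. Concretely, take $n=3$, $m_{11}=0$ and all other $m_{ij}=1$ (this $M$ is fully indecomposable), and $\pi=(0.6,\,0.2,\,0.2)$: any $p\ge 0$ with $p_{11}=0$ and both marginals $\pi$ must satisfy $p_{21}+p_{31}=0.6$, yet $p_{21}+p_{31}\le \pi_2+\pi_3=0.4$. The polytope is empty, and since any solution of \eqref{eq:DSchroesystem} with $\pi>0$ produces the coupling $\diag(\hat\varphi_0)M\diag(\varphi_1)$ lying in that polytope, the one-step Schr\"odinger system has no solution for this pair $(M,\pi)$.

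So your assertion that ``the absence of such a zero block yields strict feasibility via a max-flow / LP-duality argument'' is false, and it cannot be patched inside your proof because it is equivalent to the existence claim itself. What is true, and what your LP-duality idea does deliver, is existence (and then the rest of your argument) whenever the compatibility inequalities $\pi(R)+\pi(C)\le 1$ hold, with strictness on nontrivial zero blocks; these are automatic in the doubly stochastic case ($\pi$ uniform), and also when $M$ is fully indecomposable with strictly positive diagonal, as for the Metropolis-type priors used in Section \ref{cooling}. Note that this is also a caveat for the statement as given and for the citation-based proof: scaling theorems for prescribed, nonuniform row and column sums require such a compatibility condition between $\pi$ and the zero pattern of $M$. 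Your variational route has the merit of making that requirement visible, but deriving it from full indecomposability alone is precisely where the proof breaks.
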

 \begin{proof} Let $\varphi_0=\varphi(0,\cdot)$ and $\varphi_1=\varphi(1,\cdot)$. Observe that (\ref{eq:DSchroesystemA})-(\ref{eq:DSchroesystemB})-(\ref{one-step_opt}) admit the matricial form
 \begin{subequations}\label{eq:DSchroesystemMatrix}
\begin{eqnarray}\label{eq:DSchroesystemMatrixA}\varphi_0&=&M\varphi_1,\\\label{eq:DSchroesystemMatrixB}\hat{\varphi}_1&=&M'\hat{\varphi}_0,\\
\label{eq:DSchroesystemMatrixC}\Pi^*&=&\diag(\varphi_0)^{-1}M\diag(\varphi_1).
\end{eqnarray}
\end{subequations}
 The proof can now be constructed along the lines of  \cite[Theorem 5]{MO}.
 \end{proof}
Another interesting question is the following: Given the graph ${\bf G}$, what distributions $\pi$ admit at least one stochastic matrix compatible with the topology of the graph for which they are invariant? Clearly, if all self loops are present ($(i,i)\in\mathcal E,\forall i\in\mathcal X$), any distribution is invariant with respect to the identity matrix which is compatible. Without such a strong assumption, a partial answer is provided by the following result.

%\blueWe finally provide a partial answer to the question of what condition on the graph structure ensures that transition rates can be found so that any probability distribution can be rendered invariant by a choice of suitable transistion rates.\black

 \begin{proposition}\label{PROP2}Let $\pi$ be a probability distribution supported on all of $\mathcal X$,  i.e. $\pi(i)>0, \forall i\in\mathcal X$. Assume that the adjacency matrix of ${\bf G}=(\mathcal X,\mathcal E)$ is fully indecomposable. Then, there exist stochastic matrices $\Pi$ compatible with the topology of the graph ${\bf G}$  such that
 \[\Pi'\pi=\pi.
 \]
 \end{proposition}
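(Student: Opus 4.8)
The plan is to reduce Proposition~\ref{PROP2} to the existence result of Proposition~\ref{PROP} by using the adjacency matrix itself as the prior. Concretely, I would take as prior transition matrix $M=A$, the adjacency matrix of ${\bf G}$. This $M$ is by construction compatible with the topology of the graph, since $m_{ij}=a_{ij}=0$ exactly when $(i,j)\notin\mathcal E$; by hypothesis it is fully indecomposable; and the given $\pi$ has all positive components. Thus every hypothesis of Proposition~\ref{PROP} is satisfied with this choice of $(M,\pi)$.

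Applying Proposition~\ref{PROP}, I obtain vectors $\varphi_0=\varphi(0,\cdot)$, $\varphi_1=\varphi(1,\cdot)$ with positive entries (together with $\hat{\varphi}_0,\hat{\varphi}_1$) solving the one-step Schr\"odinger system \eqref{eq:DSchroesystem}, and I set $\Pi:=\diag(\varphi_0)^{-1}A\,\diag(\varphi_1)$ as in \eqref{eq:DSchroesystemMatrixC}, i.e. $\pi_{ij}=\varphi(1,j)\varphi(0,i)^{-1}a_{ij}$. Three properties then need to be checked, all routine. First, $\Pi$ is row-stochastic: $\pi_{ij}\ge0$ since $\varphi_0,\varphi_1>0$ and $A\ge0$, while summing over $j$ and using \eqref{eq:DSchroesystemA} gives $\sum_j\pi_{ij}=\varphi(0,i)^{-1}\sum_j a_{ij}\varphi(1,j)=1$. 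Second, $\Pi'\pi=\pi$: using \eqref{eq:DSchroesystemB}, \eqref{eq:DSchroesystemD} and the boundary relation $\varphi(0,i)\hat{\varphi}(0,i)=\pi(i)$ from \eqref{eq:DSchroesystem}, one gets $\sum_i\pi_{ij}\pi(i)=\varphi(1,j)\sum_i a_{ij}\hat{\varphi}(0,i)=\varphi(1,j)\hat{\varphi}(1,j)=\pi(j)$; this is just the statement, already recorded after \eqref{one-step_opt}, that $\pi^*$ satisfies both constraints of Problem~\ref{P2}. Third, $\Pi$ is compatible with the topology of ${\bf G}$: since $\varphi(1,j)\varphi(0,i)^{-1}>0$ for all $i,j$, we have $\pi_{ij}>0$ precisely when $a_{ij}=1$, hence $\pi_{ij}=0$ whenever $(i,j)\notin\mathcal E$.

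I do not expect a genuine obstacle here, since all the real work is packaged in Proposition~\ref{PROP}. The only points that require attention are (i) recognizing that the adjacency matrix is an admissible prior, which is where the full indecomposability assumption on the adjacency matrix — rather than mere strong connectivity / indecomposability — is essential in order to invoke Proposition~\ref{PROP}; and (ii) observing that the bridge matrix $\Pi$ inherits exactly the zero pattern of $A$, which is what ``compatible with the topology of the graph'' demands. With these observations the proof is complete.
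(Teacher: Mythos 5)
Your proposal is correct and follows essentially the same route as the paper: take $M=A$ the adjacency matrix as prior, invoke Proposition~\ref{PROP} (existence for the one-step Schr\"odinger system) to get positive $\varphi_0,\varphi_1$, and read off $\Pi=\diag(\varphi_0)^{-1}A\,\diag(\varphi_1)$, which is compatible with the graph and leaves $\pi$ invariant. The paper states this more tersely (citing Theorem~\ref{maincl} and Proposition~\ref{PROP}), while you spell out the routine verifications of stochasticity, invariance, and zero-pattern compatibility; the content is the same.
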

 \begin{proof} Take, in Problem \ref{P2}, $M=A=[a_{ij}]$ the adjacency matrix of the graph ${\bf G}$.
 By Theorem \ref{maincl} and Proposition \ref{PROP}, Problem \ref{P2} has a solution which admits $\pi$ as invariant distribution.
 \end{proof}
 One final interesting question is the following: Assume that the adjacency matrix of the graph ${\bf G}$ is fully indecomposable. Let $\mathcal Q_\pi$ be the non empty set of stochastic matrices $Q$ compatible with the topology of ${\bf G}$ and such that $Q'\pi=\pi$. Clearly $\mathcal Q$ is a convex set. Can we characterize the matrix in $\mathcal Q$ which induces a maximum entropy rate measure on the feasible paths? We already know the answer to this question. Indeed, consider the formulation of Problem \ref{P2} with the distribution of the edges of formula (\ref{SBjoint}). Recall that maximizing entropy (rate) is equivalent to minimizing relative entropy (rate) from the uniform. If we take as prior the uniform distribution on the edges, $M$ is just a positive scalar times the adjacency matrix $A$ of the graph. Thus, the solution to Problem \ref{P2} with $A$ as prior transition provides the maximum entropy rate. 
 %In the case when $A$ is symmetric, let $\pi_A:=A\mathds{1}$. 
 
 % \subsection{Maximum entropy rate distribution with  given marginal $\pi$}
  
 \section{Reversibility}\label{REVERSE}
 In \cite[Corollary 2]{CGPcooling}, we have shown that, in the case of a reversible prior (Boltzmann density) for a stochastic oscillator, the solution of the continuous countepart of Problem \ref{P1} (minimizing the expected input power) is reversible. We prove next that the same remarkable property holds here in the discrete setting.
 \begin{theorem}\label{revers} Let the transition matrix $M$ of the prior measure $\mathfrak M$ be time invariant.  Assume that  $M$ is reversible with respect to $\mu_0=\mu$, i.e.
 \begin{equation}\label{revprior}
 \diag(\mu)M=M'\diag(\mu).
 \end{equation}
 Then, then the solution $\Pi^*$  of Problem  \ref{P2}  is also reversible with respect to $\pi$.
 \end{theorem}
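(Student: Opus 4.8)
The plan is to work with the Schr\"odinger factors behind $\Pi^*$ and to exploit the uniqueness (up to scaling) of the solution of the one-step Schr\"odinger system (\ref{eq:DSchroesystem}). Write $\varphi_0,\varphi_1,\hat\varphi_0,\hat\varphi_1$ for the vectors $\varphi(0,\cdot),\varphi(1,\cdot),\hat\varphi(0,\cdot),\hat\varphi(1,\cdot)$ of a solution of (\ref{eq:DSchroesystem}) producing $\Pi^*$ via (\ref{one-step_opt}), so that $\varphi_0=M\varphi_1$, $\hat\varphi_1=M'\hat\varphi_0$, and $\varphi_0\cdot\hat\varphi_0=\varphi_1\cdot\hat\varphi_1=\pi$ (products taken componentwise); existence of such a quadruple, and its uniqueness as a ray, is exactly what Proposition \ref{PROP} and Remark \ref{uniqueness} give. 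The first step is to rewrite the target identity $\diag(\pi)\Pi^*=(\Pi^*)'\diag(\pi)$ in terms of these factors: since $\pi^*_{ij}=m_{ij}\,\varphi_1(j)/\varphi_0(i)$ and $\pi(i)=\varphi_0(i)\hat\varphi_0(i)$, the $(i,j)$ entry of $\diag(\pi)\Pi^*$ equals $\hat\varphi_0(i)\,m_{ij}\,\varphi_1(j)$. Invoking the reversibility hypothesis (\ref{revprior}) in componentwise form, $\mu(i)\,m_{ij}=\mu(j)\,m_{ji}$, one sees this matrix is symmetric as soon as $\hat\varphi_0(i)/(\mu(i)\varphi_1(i))$ is independent of $i$; hence it suffices to exhibit a solution of (\ref{eq:DSchroesystem}) with $\hat\varphi_0=c\,\diag(\mu)\,\varphi_1$ for some constant $c>0$.

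For the second step I would produce such a solution by a symmetry argument for the system itself. Using (\ref{revprior}), equivalently $M'=\diag(\mu)\,M\,\diag(\mu)^{-1}$, one checks that the substitution
\[
(\varphi_0,\varphi_1,\hat\varphi_0,\hat\varphi_1)\ \longmapsto\ \bigl(\diag(\mu)^{-1}\hat\varphi_1,\ \diag(\mu)^{-1}\hat\varphi_0,\ \diag(\mu)\varphi_1,\ \diag(\mu)\varphi_0\bigr)
\]
again solves (\ref{eq:DSchroesystem}): the two marginal conditions (\ref{eq:DSchroesystemC})--(\ref{eq:DSchroesystemD}) are merely interchanged, while the propagation equations (\ref{eq:DSchroesystemA})--(\ref{eq:DSchroesystemB}) for the new quadruple are exactly (\ref{eq:DSchroesystemB})--(\ref{eq:DSchroesystemA}) conjugated by $\diag(\mu)$, which is where (\ref{revprior}) enters. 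By Remark \ref{uniqueness} and Proposition \ref{PROP} the solution of (\ref{eq:DSchroesystem}) is unique up to the scaling $(\varphi,\hat\varphi)\mapsto(c\varphi,\tfrac1c\hat\varphi)$, so the image quadruple must coincide with $(c\varphi_0,c\varphi_1,\tfrac1c\hat\varphi_0,\tfrac1c\hat\varphi_1)$ for some $c>0$; reading off the third entry gives $\diag(\mu)\varphi_1=\tfrac1c\hat\varphi_0$, i.e. precisely $\hat\varphi_0=c\,\diag(\mu)\,\varphi_1$. Plugging this back into the first step, the $(i,j)$ entry of $\diag(\pi)\Pi^*$ becomes $c\,\mu(i)\,\varphi_1(i)\,m_{ij}\,\varphi_1(j)$, which is symmetric by (\ref{revprior}); hence $\Pi^*$ is reversible with respect to $\pi$.

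The one genuinely delicate point is the appeal to uniqueness: it relies on $M$ being fully indecomposable, which is exactly the condition (Proposition \ref{PROP}) under which Problem \ref{P2} is guaranteed solvable and its solution is the unique Schr\"odinger ray, so this is consistent with the standing hypotheses behind the statement. If one prefers not to presuppose that the minimizer of Problem \ref{P2} has the Schr\"odinger form, an alternative is to build the required solution directly: the ansatz $\hat\varphi_0=\diag(\mu)\varphi_1$ collapses all four equations of (\ref{eq:DSchroesystem}), using (\ref{revprior}), to the single scaling equation $\mu(i)\,(M\varphi_1)(i)\,\varphi_1(i)=\pi(i)$, which after setting $w=\diag(\mu)^{1/2}\varphi_1$ and noting that $S:=\diag(\mu)^{1/2}M\diag(\mu)^{-1/2}$ is symmetric and nonnegative becomes the classical symmetric matrix-scaling problem $w(i)\,(Sw)(i)=\pi(i)$, solvable with $w>0$ when $S$ (equivalently $M$) is fully indecomposable and $\pi>0$ (cf. \cite{MO,FL}). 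Either route works; I expect the only real work to be the careful verification of the symmetry map in the second step, which is routine bookkeeping.
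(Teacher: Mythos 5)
Your proposal is correct and follows essentially the same route as the paper: both exploit the fact that reversibility of $M$ turns the one-step Schr\"odinger system into one with a symmetry (the paper conjugates by $\diag(\mu)$ to get the symmetric matrix $\Sigma=\diag(\mu)M$, you apply the equivalent involution directly to the quadruple $(\varphi_0,\varphi_1,\hat\varphi_0,\hat\varphi_1)$), then invoke uniqueness of the ray to obtain $\hat\varphi_0=c\,\diag(\mu)\varphi_1$, and finally verify $\diag(\pi)\Pi^*=(\Pi^*)'\diag(\pi)$ by direct computation. Your relation $\hat\varphi_0=c\,\diag(\mu)\varphi_1$ is exactly the paper's $\hat\psi_0=c\varphi_1$, so the arguments coincide up to notation, and your remark about full indecomposability being the hypothesis underlying the uniqueness appeal is consistent with the paper's standing assumptions.
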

 \begin{proof}
Reversibility of $M$ with respect to $\mu$ is equivalent to the statement that $\Sigma:=\diag(\mu)M$ is symmetric, as noted. Define the vectors
 \[\hat{\psi}_0=\diag(\mu)^{-1}\hat{\varphi}_0, \quad \psi_0=\diag(\mu)\varphi_0.
 \]
Then, the first two equations in system (\ref{eq:DSchroesystemMatrix}) supplemented with the marginal conditions read
  \begin{subequations}\label{eq:DSchroesystemMatrix2}
\begin{eqnarray}\label{eq:DSchroesystemMatrix2A}&&\psi_0=\Sigma\varphi_1,\\\label{eq:DSchroesystemMatrix2B}&&\hat{\varphi}_1=\Sigma'\hat{\psi}_0=\Sigma\hat{\psi}_0,\\
\label{eq:DSchroesystemMatrix2C}&&\psi_0\circ \hat{\psi}_0=\varphi_1\circ \hat{\varphi}_1=\pi,
\end{eqnarray}
\end{subequations}
where the operation $\circ$ in (\ref{eq:DSchroesystemMatrix2C}) represents componentwise between vectors.
Due to the symmetry of $\Sigma$ and the uniqueness in the sense of Remark \ref{uniqueness}, we have that $\hat{\psi}_0=c\varphi_1$ and $\psi_0=\frac{1}{c} \hat{\varphi}_1$ for some positive scalar $c$. It readily follows that
\begin{align*} \diag(\pi)\Pi^*
&=\diag(\pi)\diag(\varphi_0)^{-1}M\diag(\varphi_1)\\
&=\diag(\hat{\varphi}_0)M\diag(\varphi_1)\\
&=\diag(\hat{\psi}_0)\diag(\mu)M\diag(\varphi_1)\\
&=\diag(\hat{\psi}_0)M'\diag(\mu)\diag(\varphi_1)\\
&=\diag(\hat{\psi}_0)M'\diag(\varphi_0)^{-1}\diag(\psi_0)\diag(\varphi_1)\\
&=\diag(\varphi_1)M'\diag(\varphi_0)^{-1}\diag(\hat{\varphi}_1)\diag(\varphi_1)\\&=\Pi'\diag(\pi).\hspace*{4.2cm}\Box
\end{align*}
 \end{proof}

 Returning now to the question raised at the end of Subsection \ref{Existence}, in the case when $A$ is symmetric, we can take $\pi_A:=A\mathds{1}$. %{\red Discussion?...}

\section{Cooling}\label{cooling} 
In several advanced applications such as atomic force microscopy and %micro and 
microresonators, %surface topography, 
%molecular dynamics, and so on, 
it is often important to 
%attenuate the effect of the fluid on the
dampen stochastic excitations (due to vibrations, molecular dynamics, etc.) affecting
 experimental apparatuses. This is often accomplished through feedback, steering the system to a desired steady state corresponding to an {\em effective temperature} which is lower than that of the environment (e.g., fluid), see for instance \cite{LMC,DE,BBP,MG,SR,Vin}. Optimal asymptotic and fast cooling of stochastic oscillators was studied in \cite{CGPcooling}. We show next how the results of Section \ref{OSSS} permit to derive corresponding results in the context of discrete space and time considered herein.

\subsection{Optimal fast cooling}\label{FC}
Consider the setting of the previous section where the transition matrix of the prior measure $M$ is time invariant. Let us introduce a {\em Boltzmann distribution}  $\pi_T$ on $\mathcal X$
\begin{equation}\label{BOL}
\pi_T(i):=Z(T)^{-1}\exp\left[\frac{-E_i}{kT}\right], \quad Z(T)=\sum_i \exp\left[\frac{-E_i}{kT}\right],
\end{equation}
corresponding to the ``energy function'' $E_x$, for $x\in\mathcal X$,
and let 
us assume that $\pi_T$ is invariant for $M=P(T)$, that is,
\[P(T)'\pi_T=\pi_T.
\]

One such class of transition matrices is given by $P(T,Q)=\left(p_{ij}(T,Q)\right)$ with

$$p_{ij}(T,Q)=\left\{\begin{array}{ll} q_{ij}\min\left(\exp\left(\frac{E_i-E_j}{kT}\right),1\right), &i\neq j,\\1-\sum_{l,l\neq i}q_{il}\min\left(\exp\left(\frac{E_i-E_l}{kT}\right),1\right), & i=j.\end{array}\right.,
$$
where $Q=\left(q_{ij}\right)$ is any symmetric transition matrix of an irreducible chain compatible with the topology of ${\bf G}$\footnote{As is well-known, this (Metropolis) chain is actually reversible with respect to the Boltzmann distribution.}. Let $T_{{\rm eff}}<T$ be a desired lower temperature. Given any prior, such as the Ruelle-Bowen measure \cite{CGPT1} or the invariant path space measure $\mathfrak P(T)$  with transition $P(T)$ and marginals $\pi_T$ and a sufficiently long time interval $[0,N]$, we can use a standard Schr\"odinger bridge (see Section \ref{GBProblems}) to steer optimally the network flow from any initial distribution $\nu_0$ to $\nu_N=\pi_{T_{{\rm eff}}}$ at time $N$. At time $N$, however, we need to change the transition mechanism to keep the chain in the steady state $\pi_{T_{{\rm eff}}}$. This is accomplished in the next subsection using the results of Section \ref{OSSS}.

\subsection{Optimal asymptotic cooling}\label{AC}
Consider once again the case where the prior transition $M=P(T)$ remains constant over time and has  the Boltzmann distribution (\ref{BOL}) as initial, invariant measure. Consider the equivalent Problems \ref{P1} and \ref{P2}. 
\begin{theorem}Assume that $P(T)$ is fully indecomposable. Then, the solution to Problem \ref{P2} is given by
\begin{equation}\label{OPTNEWTRANS}
\Pi^*=\diag(\varphi_0)^{-1}P(T)\diag(\varphi_1).
\end{equation}
where
 \begin{subequations}\label{eq:DSchroesystemMatrix3}
\begin{eqnarray}\label{eq:DSchroesystemMatrix3A}\varphi_0&=&P(T)\varphi_1,\\\label{eq:DSchroesystemMatrix3B}\hat{\varphi}_1&=&P(T)'\hat{\varphi}_0,\\
\label{eq:DSchroesystemMatrix3C}\varphi_0\circ\hat{\varphi}_0&=&\varphi_1\circ\hat{\varphi}_1=\pi_{T_{{\rm eff}}}.
\end{eqnarray}
\end{subequations}
As earlier, $\circ$ denotes componentwise multiplication of vectors. Moreover, if $P(T)$ is reversible with respect to $\pi_T$, so is $\Pi^*$ with respect to $\pi_{T_{{\rm eff}}}$.
\end{theorem}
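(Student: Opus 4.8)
The plan is to obtain the theorem as a direct assembly of Proposition~\ref{PROP}, Theorem~\ref{maincl}, and Theorem~\ref{revers}, once their hypotheses are matched to the present cooling setting. First I would observe that the target marginal $\pi_{T_{\rm eff}}$ is a Boltzmann distribution of the form (\ref{BOL}), hence $\pi_{T_{\rm eff}}(i)=Z(T_{\rm eff})^{-1}\exp(-E_i/kT_{\rm eff})>0$ for every $i\in\mathcal X$. Thus the pair $(M,\pi)=(P(T),\pi_{T_{\rm eff}})$ satisfies exactly the hypotheses of Proposition~\ref{PROP}: $P(T)$ is fully indecomposable by assumption and $\pi_{T_{\rm eff}}$ has all positive components. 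Proposition~\ref{PROP} then produces a solution $(\varphi_0,\hat\varphi_0,\varphi_1,\hat\varphi_1)$ of the one-step Schr\"odinger system, which here is precisely (\ref{eq:DSchroesystemMatrix3A})--(\ref{eq:DSchroesystemMatrix3C}), with $\varphi_0,\varphi_1$ strictly positive and unique up to the scaling ray of Remark~\ref{uniqueness}. Since the ratio $\varphi_1(j)/\varphi_0(i)$ is ray-invariant, the matrix $\Pi^*$ in (\ref{OPTNEWTRANS}) is unambiguously defined.

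Next I would verify that this $\Pi^*$ is the minimizer in Problem~\ref{P2}. Stochasticity $\Pi^*\mathds{1}=\mathds{1}$ is immediate from (\ref{eq:DSchroesystemMatrix3A}), since the $(i,j)$ entry of $\Pi^*$ is $m_{ij}\varphi_1(j)/\varphi_0(i)$ and $\sum_j m_{ij}\varphi_1(j)=\varphi_0(i)$. Invariance $(\Pi^*)'\pi_{T_{\rm eff}}=\pi_{T_{\rm eff}}$ follows by combining (\ref{eq:DSchroesystemMatrix3B}) with (\ref{eq:DSchroesystemMatrix3C}): writing $\pi_{T_{\rm eff}}(i)=\varphi_0(i)\hat\varphi_0(i)$ gives $\sum_i \pi^*_{ij}\pi_{T_{\rm eff}}(i)=\varphi_1(j)\sum_i m_{ij}\hat\varphi_0(i)=\varphi_1(j)\hat\varphi_1(j)=\pi_{T_{\rm eff}}(j)$. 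Optimality is Theorem~\ref{maincl} applied with horizon $T=1$ and both boundary marginals equal to $\pi_{T_{\rm eff}}$ --- equivalently, the one-step bridge between the joint laws appearing in (\ref{SBjoint}) --- whose positivity hypothesis $\varphi(t,i)>0$ is exactly what Proposition~\ref{PROP} supplied; and since the objective of Problem~\ref{P2} is strictly convex in the rows of $(\pi_{ij})$ (a sum of relative entropies with the positive weights $\pi_{T_{\rm eff}}(i)$), the minimizer is unique and equals $\Pi^*$. Via the chain of equivalences Problem~\ref{P1}$\,\Leftrightarrow\,$Problem~\ref{P2} recalled in the text, the induced stationary path-space measure then solves the entropy-rate problem.

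For the reversibility statement I would invoke Theorem~\ref{revers} verbatim, with prior transition $M=P(T)$ and $\mu_0=\mu=\pi_T$: the hypothesis (\ref{revprior}), $\diag(\pi_T)P(T)=P(T)'\diag(\pi_T)$, is precisely the assumed reversibility of $P(T)$ with respect to $\pi_T$, and the conclusion is that the solution $\Pi^*$ of Problem~\ref{P2} --- whose target marginal is now $\pi=\pi_{T_{\rm eff}}$ --- satisfies $\diag(\pi_{T_{\rm eff}})\Pi^*=(\Pi^*)'\diag(\pi_{T_{\rm eff}})$, i.e.\ $\Pi^*$ is reversible with respect to $\pi_{T_{\rm eff}}$. (The positivity of $\pi_T$, again from the Boltzmann form, is what makes the change of variables in the proof of Theorem~\ref{revers} legitimate.)

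Since every step is a citation of an earlier result after the hypotheses are matched, there is no genuine analytic obstacle here; the one point demanding care is that $P(T)$ is in general \emph{not} a positive matrix --- it must respect the topology of $\mathbf G$ --- so the elementary existence criterion ``$G$ in (\ref{posassumption}) positive'' fails, and the existence of the Schr\"odinger factors must be routed through the full-indecomposability hypothesis and Proposition~\ref{PROP} (which itself rests on \cite[Theorem~5]{MO}). I would therefore keep the ``fully indecomposable'' hypothesis explicit in the statement for exactly this reason.
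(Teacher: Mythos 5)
Your proposal is correct and follows exactly the paper's route: the paper's own proof is a one-line citation of Theorem \ref{maincl}, Proposition \ref{PROP} and Theorem \ref{revers}, and your argument is simply a careful spelling-out of how their hypotheses (positivity of the Boltzmann marginal, full indecomposability of $P(T)$, reversibility of the prior with respect to $\pi_T$) are matched, together with the direct checks of stochasticity and invariance that those results already guarantee.
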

\begin{proof}
The result follows from Theorem \ref{maincl}, Proposition \ref{PROP} and Theorem \ref{revers}. 
\end{proof}

\section{Conclusion}
The key point of this paper has been to explore analogues, in the setting of discrete space and time, for
Markovian evolutions that match specified marginals and echo well-known results in the continuous time setting
for Sch\"odinger bridges. Specifically, a key result is to show reversibility of the transition rates under the assumption that this is a property of the prior. We consider both, transitioning between marginals over a specified time window as well as the problem to secure a stationary distribution. Both cases are accomplished  by suitably adjusting the transition rates from a given prior, while keeping the distance of the new law closest to that corresponding to the prior in the relative entropy sense. Application of the results in the context regulating the flow of resources over a network is contemplated. A notion of temperature when distributions are expressed as Boltzmann distributions with respect to an energy function is brought up, and the problem to transition between distributions corresponding to different temperatures considered, in analogy to the problem of cooling in continuous time and space.

% if have a single appendix:
%\appendix[Proof of the Zonklar Equations]
% or
%\appendix  % for no appendix heading
% do not use \section anymore after \appendix, only \section*
% is possibly needed

% use appendices with more than one appendix
% then use \section to start each appendix
% you must declare a \section before using any
% \subsection or using \label (\appendices by itself
% starts a section numbered zero.)
%
\appendix[Proof of Theorem \ref{maincl}]\label{A}
We form the Lagrangian for Problem \ref{problemfluid-dynamic}:
\begin{eqnarray}\nonumber&&\mathcal L(p,\pi;\lambda)=\sum_{t=0}^{N-1}\sum_{x_t}\D(\pi_{x_tx_{t+1}}(t)\|m_{x_tx_{t+1}}(t))p_t(x_t)\\&&+\sum_{t=0}^{N-1}\sum_{x_{t+1}}\lambda_t(x_{t+1})\left[p_{t+1}(x_{t+1})-\sum_{x_t} p_t(x_t) \pi_{x_{t}x_{t+1}}(t)\right]\nonumber
%\\&+&\sum_{t=0}^{N-1}\sum_{x_t}\mu_t(x_t)\left[\sum_{x_{t+1}}\pi_{x_{t}x_{t+1}}(t)-1\right].\nonumber
\end{eqnarray}
 We now use discrete integration by parts:
\begin{eqnarray}\nonumber&&\sum_{t=0}^{N-1}\sum_{x_{t+1}}\lambda_t(x_{t+1})p_{t+1}(x_{t+1})\\\nonumber&&=\sum_{t=0}^{N-1}\sum_{x_{t}}\lambda_{t-1}(x_{t})p_{t}(x_{t})\\\nonumber&&+\sum_{x_N}\lambda_{N-1}(x_N)\nu_N(x_N)-\sum_{x_0}\lambda_{-1}(x_0)\nu_0(x_0).
\end{eqnarray}
Thus, the Lagrangian reads
\begin{eqnarray}\nonumber&&\mathcal L(p,\pi;\lambda)=\sum_{t=0}^{N-1}\sum_{x_t}\D(\pi_{x_tx_{t+1}}(t)\|m_{x_tx_{t+1}}(t))p_t(x_t)\\&&+\sum_{t=0}^{N-1}\sum_{x_{t}}\left[\lambda_{t-1}(x_{t})- \sum_{x_{t+1}}\lambda_t(x_{t+1})\pi_{x_{t}x_{t+1}}(t)\right]p_t(x_t)\nonumber
\\&&+\sum_{x_N}\lambda_{N-1}(x_N)\nu_N(x_N)-\sum_{x_0}\lambda_{-1}(x_0)\nu_0(x_0).\nonumber
\end{eqnarray}
Observe that the last two terms are invariant over pairs $(p,\pi)$ satisfying (\ref{fluid-dynamic2})-(\ref{fluid-dynamic4}) and can therefore be discarded. Since we are minimizing over pairs satisfying the nonnegativity requirement and with $\pi$ satisfying (\ref{fluid-dynamic3}), we can multiply $\lambda_{t-1}(x_t)$ by $\sum_{x_{t+1}}\pi_{x_{t}x_{t+1}}(t)$ and subtract $N=\sum_{t=0}^{N-1}\sum_{x_{t}}\sum_{x_{t+1}}\pi_{x_{t}x_{t+1}}(t)p_t(x_t)$ to get
\begin{eqnarray}\nonumber&&\mathcal L(p,\pi;\lambda)=\sum_{t=0}^{N-1}\sum_{x_t}\D(\pi_{x_tx_{t+1}}(t)\|m_{x_tx_{t+1}}(t))p_t(x_t)\\&&+\sum_{t=0}^{N-1}\sum_{x_{t}}\sum_{x_{t+1}}\pi_{x_{t}x_{t+1}}(t)\left[\lambda_{t-1}(x_{t})- \lambda_t(x_{t+1})\right]p_t(x_t)\nonumber
\\&&-\sum_{t=0}^{N-1}\sum_{x_{t}}\sum_{x_{t+1}}\pi_{x_{t}x_{t+1}}(t)p_t(x_t).\nonumber
\end{eqnarray}

Next, we perform two-stage unconstrained minimization of the Lagrangian:
For fixed flow $p=\{p_t\}$ (and fixed multipliers), we optimize the strictly convex function $L(p,\cdot;\lambda)$ at each time $t$. We get the optimality conditions:
\begin{eqnarray}\nonumber
&&\left[1+\log\pi^*_{x_{t}x_{t+1}}(t)-\log m_{x_{t}x_{t+1}}(t)\right.\\\nonumber&&\left.+\lambda_{t-1}(x_t)-\lambda_{t}(x_{t+1})-1\right]p_t(x_t)=0,
\end{eqnarray}
or
\begin{equation}\label{optimalpi}
\pi^*_{x_{t}x_{t+1}}(t)=m_{x_{t}x_{t+1}}(t)\exp\left[\lambda_{t}(x_{t+1})-\lambda_{t-1}(x_t)\right].
\end{equation}
Define the positive function
\[\varphi(t+1,x_{t+1})=\exp\left[\lambda_t(x_{t+1})\right].
\]
Then (\ref{optimalpi}) becomes
\[\pi^*_{x_{t}x_{t+1}}(t)=m_{x_{t}x_{t+1}}(t)\frac{\varphi(t+1,x_{t+1})}{\varphi(t,x_{t})}.
\]
Let us impose condition  (\ref{fluid-dynamic3}) on $\pi^*$. We get that $\pi^*$ satisfies such condition if and only if $\varphi$ satisfies the reverse-time recursion
\begin{eqnarray}\nonumber
\varphi(t,x_t)&=&\sum_{x_{t+1}}m_{x_{t}x_{t+1}}(t)\varphi(t+1,x_{t+1}),\\\nonumber &&\forall x_t\in\mathcal X, \quad  t=0,1,\ldots, N-1.
\end{eqnarray}
Observe now that
\begin{eqnarray}\nonumber\mathcal L(p,\pi^*;\log\varphi)&=&\sum_{t=0}^{N-1}\sum_{x_t}\sum_{x_{t+1}}\left[-m_{x_{t}x_{t+1}}(t)\frac{\varphi(t+1,x_{t+1})}{\varphi(t,x_{t})}\right]p_t(x_t)\\\nonumber&=&-\sum_{t=0}^{N-1}\sum_{x_t}p_t(x_t)=-N,
\end{eqnarray}
which is independent of $p$. Thus, we can choose $p^*$ defined by
$$p_{t+1}^*(x_{t+1})= \sum_{x_t} p^*_t(x_t) \pi^*_{x_{t}x_{t+1}}(t),\quad p_0(x_0)=\nu(x_0),
$$
so as to satisfy constraint (\ref{fluid-dynamic2}) and the first of (\ref{fluid-dynamic4}).
Define
$$\hat{\varphi}(t,x_t):=\frac{p^*_t(x_t)}{\varphi(t,x_t)}.
$$
We get 
\begin{eqnarray}\nonumber
\hat{\varphi}(t+1,x_{t+1})&=&\frac{p^*_{t+1}(x_{t+1})}{\varphi(t+1,x_{t+1})}\\\nonumber&=&\frac{ \sum_{x_t} p^*_t(x_t) \pi^*_{x_{t}x_{t+1}}(t)}{\varphi(t+1,x_{t+1})}\\\nonumber&=&\frac{ \sum_{x_t} p^*_t(x_t) m_{x_{t}x_{t+1}}(t)\frac{\varphi(t+1,x_{t+1})}{\varphi(t,x_{t})}}{\varphi(t+1,x_{t+1})}\\&=&\sum_{x_t}  m_{x_{t}x_{t+1}}(t)\frac{p^*_t(x_t)}{\varphi(t,x_{t})}\nonumber\\\nonumber&=& \sum_{x_t}  m_{x_{t}x_{t+1}}(t)\hat{\varphi}(t,i),
\end{eqnarray}
namely $\hat{\varphi}(t,x_t)$ is {\em space-time co-harmonic}.

% use section* for acknowledgment
\ifCLASSOPTIONcompsoc
  % The Computer Society usually uses the plural form
  \section*{Acknowledgments}
  Supported in part by the
NSF under grant 1807664, 1839441, 1901599, 1942523, the AFOSR under grants FA9550-17-1-0435, and by the University of Padova Research Project CPDA 140897.
\else
  % regular IEEE prefers the singular form
  \section*{Acknowledgment}
\fi

% Can use something like this to put references on a page
% by themselves when using endfloat and the captionsoff option.
\ifCLASSOPTIONcaptionsoff
  \newpage
\fi


\begin{thebibliography}{1}

\bibitem{BB} J. Benamou and Y. Brenier, A computational fluid mechanics solution to the Monge-Kantorovich mass transfer problem, {\em Numerische Mathematik},  {\bf 84}, no. 3, 2000, 375-393.

\bibitem{Beu} A. Beurling, An automorphism of product measures, {\em Ann.
Math.} {\bf 72} (1960), 189-200.

\bibitem{Berts}D. Bertsekas,  {\em Dynamic Programming and Optimal Control}, vol 2. MA: Athena, 1995.

\bibitem{BBP}Y. Braiman, J. Barhen, and V. Protopopescu, Control of Friction at the Nanoscale, {\em Phys. Rev. Lett.} {\bf 90}, (2003),  094301.
 
 \bibitem{CGPcooling}Y.\ Chen, T.T.\ Georgiou and M.\ Pavon, Fast cooling for a system of stochastic oscillators, {\em J. Math. Phys.}, {\bf 56(11):} 113302, 2015.
 
 \bibitem{CGPJOTA} Y.\ Chen, T.T.\ Georgiou and M.\ Pavon, On the relation between optimal transport and Schr\"{o}dinger bridges: A stochastic control viewpoint, {\em J. Optim. Theory and Applic.}, {\bf 169(2):} 671-691, 2016.
 
 \bibitem {CGP_SIAMAppl.} Y.\ Chen, T.T.\ Georgiou and M.\ Pavon, Entropic and displacement interpolation:
a computational approach using the Hilbert metric, {\em SIAM Journal on Applied Mathematics}, {\bf 76} (6), 2375-2396, 2016.
 
 \bibitem{CGP_SIREV}Y.\ Chen, T.T.\ Georgiou and M.\ Pavon, Stochastic control liaisons: Richard Sinkhorn meets Gaspard Monge on  a Schr\"{o}dinger bridge, ArXiv e-prints, arXiv: 2005.10963, {\em SIAM Review}, to appear. 
 
 \bibitem{CGPAnnualReview}Y.\ Chen, T.T.\ Georgiou and M.\ Pavon, Optimal Transport in Systems and Control,  {\em Annual Review of Control, Robotics, and Autonomous Systems}, published online at https://www.annualreviews.org/doi/abs/10.1146/annurev-control-070220-100858, to appear in vol. {\bf 4}, May 2021.
 
 \bibitem{CGP_CSM}Y.\ Chen, T.T.\ Georgiou and M.\ Pavon, Controlling Uncertainty: Schr\"odinger's Inference Method and the Optimal Steering of Probability Distributions, June 2020, {\em IEEE Control Systems Magazine}, to appear.
 
  \bibitem{CGPT1} Y.\ Chen, T.T.\ Georgiou, M.\ Pavon and A. Tannenbaum, Robust transport over networks, {\em IEEE Trans. Aut. Control}, {\bf 62}, n.9, 4675-4682, 2017.
 
 \bibitem{CGPT2} Y.\ Chen, T.T.\ Georgiou, M.\ Pavon and A. Tannenbaum, Efficient-robust routing for single commodity network flows,  
{\em IEEE Trans. Aut. Control}, {\bf 63(7):} 2287-2294, 2018.

\bibitem{CGPT3} Y.\ Chen, T.T.\ Georgiou, M.\ Pavon and A. Tannenbaum, Relaxed Schro\"odinger bridges and robust network routing, 
{\em IEEE Trans. on Control of Network Systems}, {\bf 7(2):} 923-931, 2020. 

\bibitem{Cuturi} M.Cuturi, ``Sinkhorn Distances: Lightspeed Computation of
Optimal Transport", {\em Advances in Neural Information Processing Systems},
2292-2300, 2013.

\bibitem{DL} J.-C. Delvenne and A.-S. Libert, Centrality measures and thermodynamic formalism for complex networks, {\em Physical Review E}, {\bf 83} (4), 046117, 2011.
  
 \bibitem{DS1940} W. E. Deming and F. F. Stephan, On a least squares adjustment of a sampled frequency table when the expected marginal totals are known, {\em Ann. Math. Statist.}, {\bf 11} (4), (1940), 427-444.
 
 \bibitem{DE}M. Doi and S. F. Edwards, {\em The Theory of Polymer Dynamics}, Oxford University Press, New York, 1988.
 
 \bibitem{FL} T. I. Fenner and G. Loizou, On fully indecomposable matrices, {\em J. Computer and System Sciences}, {\bf 5}, 1971, 607-622. 
 
 \bibitem{F2} H. F\"{o}llmer, Random fields and diffusion processes, in:
{\em \`Ecole
d'\`Et\`e de Probabilit\`es de Saint-Flour XV-XVII}, edited by P. L.
Hennequin, Lecture
Notes in Mathematics, Springer-Verlag, New York, 1988, vol.1362,102-203.

\bibitem{For} R. Fortet, R\'esolution d'un syst\`eme d'equations de M.
Schr\"{o}dinger, {\em J. Math. Pure Appl.} IX (1940), 83-105.
 
 \bibitem{GP} T. T. Georgiou and M. Pavon, Positive contraction mappings for classical and quantum Schr\"{o}dinger systems, {\em J. Math. Phys.}, {\bf 56}, 033301 (2015).
 
 \bibitem{Jam2} B. Jamison, The Markov processes of Schr\"{o}dinger, {\em Z.
Wahrscheinlichkeitstheorie verw. Gebiete} {\bf 32} (1975), 323-331.

 
 \bibitem{leo2} C. L\'eonard, A survey of the Schroedinger problem and some of its connections with optimal transport, {\em Discrete Contin. Dyn. Syst. A}, 2014, {\bf 34} (4): 1533-1574.
 
 \bibitem{LMC} S. Liang, D. Medich, D. M. Czajkowsky, S. Sheng, J. Yuan, and Z. Shao, {\em Ultramicroscopy}, {\bf 84} (2000), p.119.
 
 \bibitem{MG} F. Marquardt and S. M. Girvin, Optomechanics, {\em Physics 2}, {\bf 40} (2009).

 
 \bibitem{MO} A. W. Marshall and I. Olkin, Scaling of matrices to achieve specified row and column sums, {\em Numerische Mathematik}, {\bf 12}, 1968, 83-90.

\bibitem{Monge} G. Monge, M\'emoire sur la th\'eorie des d\'eblais et des
remblais, {\em Histoire de l'Academie royale des sciences avec les m\'emoires de math\'ematique et de physique tirés des registres de cette Académie}, De l'Imprimerie Royale, 1781, 666-704.


 \bibitem{PT} M. Pavon and F. Ticozzi, Discrete-time classical and quantum  Markovian evolutions: Maximum entropy  problems on path space, {\em J. Math. Phys.}, {\bf 51}, 042104-042125 (2010).
 
 \bibitem{PC} G. Peyr\'e and M. Cuturi, Computational Optimal Transport,
{\em Foundations
and Trends in Machine Learning}, {\bf 11}, no. 5-6,1-257, 2019.
 
 \bibitem{Put} M. L. Puterman,  {\em Markov Decision Processes}, Wiley, 1994.
 
 \bibitem{S1} E. Schr\"{o}dinger, \"{U}ber die Umkehrung der Naturgesetze,
{\em Sitzungsberichte
der Preuss Akad. Wissen. Berlin, Phys. Math. Klasse}, {\bf 10}, 144-153, 1931.

\bibitem{S2} E. Schr\"{o}dinger, Sur la th\'{e}orie relativiste de
l'\'{e}lectron et l'interpretation de
la m\'{e}canique quantique, {\em Ann. Inst. H. Poincar\'{e}} {\bf 2},
269 (1932).

\bibitem{SR} K. C. Schwab and M. L. Roukes, {\em Phys. Today} 58, No. 7, 36 (2005).


\bibitem{Sin64} R. Sinkhorn, A relationship between arbitrary positive matrices and doubly stochastic matrices,  {\em Ann. Math. Statist.}, {\bf 35} (2)(1964), 876-879.

\bibitem{Varga} R. S. Varga, {\em Matrix Iterative Analysis}, Prentice-Hall, 1962.

\bibitem{Vin} A. Vinante, M. Bignotto, M. Bonaldi {\em et al.}, Feedback Cooling of the Normal Modes of a Massive Electromechanical System to Submillikelvin Temperature, {\em Physical Review Letters} {\bf 101} (2008), 033601.

 
 \bibitem{W} A. Wakolbinger, Schroedinger bridges from 1931 to 1991, in {\em Proc. of the 4th Latin American Congress in Probability and Mathematical Statistics}, Mexico City 1990, {\em Contribuciones en probabilidad y estadistica matematica}, {\bf 3} (1992), 61-79.

\end{thebibliography}
\end{document}